\newcommand{\maxtwo}{\ensuremath{\mathop{\rm max2}}}
\newcommand{\pro}{\ensuremath{\mathrm{rev}}}
\newcommand{\supp}{\ensuremath{\mathrm{supp}}}
\newcommand{\B}{\ensuremath{\mathcal{B}}}
\title{Send Mixed Signals -- Earn More, Work Less}
\author{
Peter Bro Miltersen\footnote{The author acknowledge support from
the Danish National Research Foundation and The National Science Foundation of
China (under the grant 61061130540) for the Sino-Danish Center for the Theory of
Interactive Computation and from the Center for Research in Foundations of Electronic Markets (CFEM), supported by the Danish Strategic Research Council.}
\\ Aarhus University
\and
Or Sheffet
\\ Carnegie Mellon University
}
\date{}
\begin{document}
\maketitle

\begin{abstract}
Emek {\em et al.} presented a model of probabilistic single-item second price auctions where an auctioneer who is informed about the type of an item for sale, broadcasts a signal about this type to uninformed bidders. They proved that finding the optimal (for the purpose of generating revenue) {\em pure} signaling scheme is strongly NP-hard. In contrast, we prove that finding the optimal {\em mixed} signaling scheme can be done in polynomial time using linear programming. For the proof, we show that the problem is strongly related to a problem of optimally bundling divisible goods for auctioning. We also prove that a mixed signaling scheme can in some cases generate twice as much revenue as the best pure signaling scheme and we prove a generally applicable lower bound on the revenue generated by the best mixed signaling scheme.
\end{abstract}

\section{Introduction}
\label{sec:intro}

%

\paragraph{Background} Emek {\em et al.} \cite{Emek} recently introduced the following {\em probabilistic single-item auction} model: An auctioneer sells a single item to $n$ bidders. The item comes from one of $m$ different types, and the valuations of the bidders for the item vary between the different $m$ types, with the valuation $v_{ij}$ of bidder $i$ for an item of type $j$ being common knowledge (or at least known to the auctioneer). The actual type of the item is determined by nature, with the probability $p_j$ of each type $j$ occurring also being common knowledge. There is asymmetry of information in the setting in one respect only: The auctioneer knows the \emph{realization} of the type of the item, whereas the bidders do not. The auction proceeds by the auctioneer broadcasting to the bidders a single {\em signal} about the type of the item. In the work of Emek {\em et al.}, the signaling schemes considered are {\em pure}. That is, the signal is simply some function of the type of the item and in particular, there is a one-to-one correspondence between signaling schemes and partitions of the set of types. After receiving the signal, the bidders bid for the item in a standard 2nd price sealed-bid auction. It is assumed that bidders are risk neutral and play the dominant strategy of bidding their expected valuation given their signal in this auction. Emek {\em et al.} investigated the following question: {\em To which extent can the auctioneer exploit her informational advantage to increase revenue by choosing the signaling scheme appropriately?}

Emek {\em et al.} show examples where non-trivial schemes significantly outperform the two trivial ones (which are: fully revealing the type of the item and revealing nothing at all). They show that it is strongly NP-hard to compute the pure signaling scheme that maximizes revenue among all such schemes. Their main result is a polynomial time algorithm that finds a pure signaling scheme that approximates the revenue of the optimal one within a constant factor.  


\paragraph{Our Results} In this work, we consider the extension of the model of Emek {\em et al.} consisting of allowing the auctioneer to use a {\em mixed} signaling scheme. In such
a scheme, the auctioneer, after witnessing the realization of the item, picks a signal at random according to some probability distribution depending on this realization. We show that by making this very natural extension of the model, we kill two birds with one stone: 
\begin{itemize}
\item{}{\em We earn more:} We show that there are problem instances (with arbitrarily many bidders) where the optimal mixed signaling scheme generates twice the revenue generated by the optimal pure signaling scheme. Also, we show that the revenue generated is never less than $\B/2$, where $\B = \min_{i'} \left(\sum_j \max_{i\neq i'} p_j v_{i,j}\right)$. We postpone to Section~\ref{sec:approximating-benchmark} a detailed discussion as to why this particular benchmark is meaningful.
\item{}{\em We work less:} We show that the optimal mixed signaling scheme can be found in polynomial time, by devising a concise linear program describing this optimal scheme. While it is certainly intuitive that linear programming should be used to find an optimal mixed strategy, we need to prove several structural results concerning the optimal solution before being able to devise a polynomial sized linear program in the present setting. 
\end{itemize}

\paragraph{Discussion of the model} We are aware that in the setting of Emek {\em et al.} (which is our setting as well), having the valuations known to the auctioneer makes it is less than obvious why the model requires the item to be sold in a $2$nd price auction. Indeed, simply posting an appropriately chosen price would generate more revenue. Also, the assumption about valuations being known to the auctioneer is itself questionable (note in particular that there is no obvious way to truthfully elicit these valuations from the bidders). To address this critique, we note that Emek {\em et al.} use the complete information setup and the associated results outlined above as a component in an analysis of a {\em Bayesian} variant of the setup, where the auctioneer is unaware of the actual valuations and has to base her signaling scheme solely on a probabilistic model thereof. Our mixed signaling variant can replace the original pure one also in this Bayesian variant and will increase its revenue and decrease its computational complexity. (We believe it would be interesting to understand how well such a scheme approximates the revenue of the {\em optimal} Bayesian auction in the sense of Myerson \cite{Myer} in this setting, and suggest this as a possible topic for future work.) Another, more down-to-earth answer to the critique is that a 2nd price auction is simply a very natural, well-known and wide-spread scheme for selling an item and that it therefore makes sense to fix this part of the setup when the main agenda is to investigate how signaling can improve revenue. In essence, our setting allows us to give an \emph{exact quantification} of the gain the auctioneer can obtain by optimally leveraging her informational advantage. And, as discussed in Section~\ref{sec:approximating-benchmark}, if the valuations are not dominated by a single bidder, then our benchmark-approximation analysis shows that the revenue from $2$nd price auctions is comparable with the revenue of the posted price scheme.

\paragraph{Related Research} Due to our setup being a variant of the setup of Emek {\em et al.}, we refer to their paper for an extensive discussion regarding  works dealing with sellers exploiting their informational advantage (dating back to the Nobel Prize winning work of Akerlof~\cite{akerlof1970}). However, unlike their pure-signal scheme, our mixed-signal scheme has an alternative interpretation as a model in which the auctioneer sells $m$ \emph{divisible} goods to $n$ bidders which have simple linear valuations per item, by \emph{bundling} subsets of these goods together (see Section~\ref{subsec:divisible_goods_model}). The problem of bundling goods, including divisible goods, has received considerable attention in the economics literature (e.g., \cite{Adam}) as well as the problem of auctioning divisible goods (see~\cite{Back_Zender_2001, Ausubel04auctioningmany, Iyengar_Kumar_2008} and the books ~\cite{cramton2010combinatorial, klemperer2004auctions}). However, our particular model does not seem to have been considered. 

\paragraph{Organization of Paper} First, in Section~\ref{sec:preliminaries}, we provide the details of our mixed-signals model and demonstrate that it is equivalent to an auction model concerning bundling of divisible goods.
In Section \ref{sec:earnmore}, we present the examples where sending mixed signals significantly increases revenue.
Then, in Section~\ref{sec:opt-mixed-singals-LP} we show that it is feasible to devise the optimal mixed-signals scheme in poly-time, using a polynomial size LP. Finally, we show in Section~\ref{sec:approximating-benchmark} that the revenue of the mixed signal auction is at least half the benchmark $\B$. We conclude with discussion and open problems in Section~\ref{sec:conclusion}.

\section{Preliminaries -- The Model}
\label{sec:preliminaries}
\subsection{The Problem Formalization}
\label{subsec:problem_formally}
In a {\em probabilistic single-item auction with mixed signals}, an auctioneer wants to sells an item drawn from a known distribution $(p_1,p_2,\ldots,p_m)$ over $m$ types. There are $n$ bidders that wish to purchase the item, each with valuation $v_{i,j}$ for an item of type $j$, with these valuations being common knowledge. The auctioneer observes the type of the item and broadcasts a signal to the bidders. The signaling scheme is strategically chosen by the auctioneer in advance and is given by a map $\varphi:[m] \times \mathbb{N} \rightarrow [0,1]$, such that for every $j$, the auctioneer declares signal $S$ with probability $\varphi(j,S)$. As we later show, the overall number of signals we send can be assumed to be finite in the scheme generating the largest revenue, so we can assume that from some signal $B$ and onwards, the function $\varphi$ is identically $0$ (formally, $\forall j$ and $\forall S\geq B, \ \varphi(j,S)=0$). We abuse notation and identify $S$ with its support (i.e., the set $\{j:\ \varphi(j,S)>0\}$). We also alternate between the notations $\varphi(j,S)$ and $\varphi_{j,S}$. We denote $\mathcal{S}_\varphi$ as the set of all possible signals, i.e. $\mathcal{S}_\varphi = \{S:\ \varphi(j,S) > 0 \textrm{ for some }j\}$.
After receiving the signal, the bidders participate in a 2nd price auction for the item. 

A {\em pure} signaling scheme is one where $\varphi(j,S) \in \{0,1\}$ for all $j,S$. The variant of the above setup where the autioneer is restricted to use a pure signaling $\varphi$ is the
probabilistic single-item auction with pure signals originally suggested by Emek {\em et al.} Let us repeat a derivation from Emek et al. for the more general mixed case. For a fixed signal $S$, the probability of the auctioneer broadcasting this signal is $\sum_j p_j \varphi(j,S)$, and so, given that the auctioneer broadcasted the signal $S$, the probably that the item is of type $j$ is $\Pr[j|S] =  {p_j \varphi(j,S)}/ \left({\sum_{j'} p_{j'} \varphi(j',S)}\right)$. As a result, given signal $S$, the adjusted valuation of bidder $i$ over the item is $\E[v_i|S] = \sum_j \Pr[j|S]v_{i,j} =  {\sum_j v_{i,j} p_j \varphi(j,S)}/ \left({\sum_{j'} p_{j'} \varphi(j',S)}\right)$. We assume risk neutral bidders, who follow the dominant strategy of bidding this adjusted valuation in the 2nd price auction. Therefore, for signal $S$, the auctioneer's revenue is \[\maxtwo_i\left\{\E[v_i|S]  \right\} = \maxtwo_i \left\{ \frac {\sum_j v_{i,j} p_j \varphi(j,S)} {\sum_{j} p_{j} \varphi(j,S)} \right\}.\] We are interested in the  $\varphi$ that maximizes the expected revenue:
\begin{eqnarray*}
& \textrm{maximize } & \sum_{S\in\mathcal{S}_\varphi} \Pr[S]\maxtwo_i\left\{\E[v_i|S]  \right\} \cr
&&  = \sum_{S\in\mathcal{S}_\varphi} \maxtwo_i \left\{\sum_j  \left(v_{i,j}p_j\right)\varphi(j,S)\right\} = \sum_{S\in\mathcal{S}_\varphi} \maxtwo_i \left\{ \sum_j \psi_{i,j}\ \varphi(j,S)\right\} 
\end{eqnarray*}
where the last equality merely comes from introducing the definition $\psi_{i,j} = v_{i,j}p_j$. 

\subsection{Equivalent Model of Divisible Goods}
\label{subsec:divisible_goods_model}

We observe that a probabilistic single-item auction with mixed signals can alternatively be seen as an auction where $m$ divisible goods are bundled and sold. The mixed signals are crucial for this characterization.
The alternative model may be defined as follows: The auctioneer wishes to sell $m$ heterogeneous divisible goods to $n$ bidders. She has $1$ unit of each of the goods (for example, she has $1$ kilogram from each of $m$ exotic spices). Each bidder $i$ has linear valuation of $\psi_{i,j}$ for each unit of good $j$, so bidder $i$ has a utility of $\sum_j x_j \psi_{i,j}$ if he receives $x_j$ units of each good $j$. The auctioneer sells her goods by \emph{bundling} several goods together. More precisely, she uses a bundling scheme $(\mathcal{S}, \phi)$, where in each bundle $S \in \mathcal{S}$, she places $\varphi_{j,S}$ units of good $j$, and then she runs a $2$nd price auction for each bundle. We assume that bidders follow their dominant strategy of bidding their valuation for the bundle for sale in each of these auctions. 

The analogy between signaling in the model of one good of $m$ different types, and bundling in the model of $m$ divisible goods, is clear. Given a probabilistic single-item auction with $n$ bidders and $m$ types, we can define a divisible goods auction with $n$ bidders and $m$ goods by letting $\psi_{i,j} = p_j v_{i,j}$. Conversely, given a divisible goods auction with $n$ bidders and $m$ goods, we can define a probabilisitc single-item auction with $n$ bidders and $m$ types by letting $(p_j)_{j=1}^m$ be an arbitrary probability distribution with $p_j > 0$ for each $j$ and letting $v_{i,j} = \psi_{i,j}/p_j$. Also, mixed signaling schemes in the probabilistic single-item auction and bundling schemes in the divisible goods auctions are syntactically the same objects. Finally,  it is readily checked that the expected revenue in the probabilistic single-item auction is identical (up to a scaling factor) to the revenue in the corresponding divisible good auction. Therefore, finding an optimal mixed signaling scheme in the first model is equivalent to finding an optimal bundling scheme in the latter.

As a result of the above, we allow ourselves the liberty to alternate between the two models.

\section{Earning More by Sending Mixed Signals}
\label{sec:earnmore}
A simple example where the best mixed signaling scheme outperforms the best pure signaling scheme is the following. Assume it is the case where $m=n=3$, the item is equally likely to be any one of the three types, and the valuations are the identity matrix (bidder $i$ wants only item of type $i$, so $v_{i,i} = 1$, and no other type, so $v_{i,j} = 0$ when $i\neq j$). A pure signaling scheme is forced to pair two of the three types, and results in expected revenue of $\frac{1}{3}$. In contrast, a mixed signaling scheme may use all $3$ signals $\{1,2\}, \{1,3\}, \{2,3\}$, and declare any signal that type $j$ belongs to with equal probability. (E.g., if the item type is $1$, then with probability $\frac{1}{2}$ the auctioneer declares $\{1,2\}$ and with probability $\frac{1}{2}$ she declares $\{1,3\}$.) Now, no matter what cluster $\{j, j'\}$ was declared, both bidder $j$ and bidder $j'$ know there's a $50\%$ chance that the item is of their desired type, resulting in a bid of $\frac{1}{2}$ from both bidder $j$ and bidder $j'$. Thus, the auctioneer gains revenue of $\frac{1}{2}$ with a mixed signaling scheme, exhibiting a gap of $1.5$ between the best mixed signaling scheme and the best pure signaling scheme. By a slightly more complex construction, we can get a gap of 2:
\begin{theorem}
For any even number $k$, there is a probabilistic single-item auction with $n=k+1$ bidders and $m=k+1$ types so that the optimal mixed signaling scheme has an expected revenue which is twice as big as that of the optimal pure signaling scheme.
\end{theorem}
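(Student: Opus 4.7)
The plan is to exhibit a simple instance with $n = m = k+1$ that uses the identity valuation matrix $v_{i,j} = \mathbf{1}[i = j]$ but a strongly asymmetric prior: take $p_1 = 1/2$ and $p_j = 1/(2k)$ for $j = 2, \ldots, k+1$. The driving intuition is that the ``heavy'' type $1$ is always the top bidder in any signal it appears in, so the second-highest bid is always among the ``light'' types and hence worth only $1/(2k)$; a mixed scheme, by contrast, can use type $1$ as a diluent to pair up with each light type simultaneously, extracting revenue from every $\{1,j\}$ bundle.

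Step 1 is to upper-bound the optimal pure revenue. Under identity valuations, a direct calculation shows that for any block $B$ of a partition, $\Pr[B] \cdot \maxtwo_i \E[v_i \mid B] = \maxtwo_{i \in B} p_i$, which is $0$ when $|B|=1$. For our prior, every non-singleton block contributes exactly $1/(2k)$: if $1 \in B$ the largest weight is $p_1 = 1/2$ and the second-largest is some $p_j = 1/(2k)$, while if $1 \notin B$ every $p_j$ in $B$ already equals $1/(2k)$. Since $k+1$ is odd, any partition of $[k+1]$ has at most $k/2$ non-singleton blocks (the pairing-plus-singleton partition attains this), so every pure scheme earns at most $(k/2)\cdot(1/(2k)) = 1/4$.

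Step 2 is to pin the optimal mixed revenue at exactly $1/2$. For the lower bound I would use the scheme with $k$ signals $S_j = \{1, j\}$ for $j = 2, \ldots, k+1$, where $\varphi(j, S_j) = 1$ and $\varphi(1, S_j) = 1/k$. A short computation yields $\Pr[S_j] = 1/k$, a balanced posterior of $1/2$ on each of type $1$ and type $j$, bids of $1/2$ from bidders $1$ and $j$, and therefore $\maxtwo = 1/2$ and a contribution of $1/(2k)$ per signal, totalling $1/2$. For the matching upper bound, apply the elementary inequality $\maxtwo_i x_i \leq \tfrac{1}{2}\sum_i x_i$ per signal to the quantities $x_i(S) = p_i \varphi(i, S)$, which yields $\sum_S \maxtwo_i (p_i \varphi(i, S)) \leq \tfrac{1}{2}\sum_i p_i \sum_S \varphi(i, S) = \tfrac{1}{2}$. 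Combined, mixed revenue equals $1/2$ and the gap over pure is exactly~$2$.

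I do not expect any serious technical obstacle once the construction is in hand; the analysis is short and uses only the elementary inequality $2\,\maxtwo \le \sum$. The genuine creative hurdle is identifying the construction, since the symmetric identity example of Section~\ref{sec:earnmore} only yields a $3/2$ gap (and in fact it degrades toward~$1$ as $m$ grows). The point is that skewing the prior so that a single heavy type monopolizes the max slot in every block forces the pure second-max down to $1/(2k)$, while still letting the mixed scheme pair that heavy type with all $k$ light types at once and extract the full per-pair revenue.
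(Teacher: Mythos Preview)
Your proposal is correct and uses essentially the same construction as the paper: a diagonal $\psi$-matrix with one heavy entry and $k$ equal light entries, together with the mixed scheme that pairs the heavy type with each light type; you place the asymmetry in the prior (identity valuations, $p_1=\tfrac12$) while the paper places it in the valuations (uniform prior, $v_{0,0}=m$), which is the same idea under the model equivalence of Section~\ref{subsec:divisible_goods_model}. If anything, your write-up is more careful, since you explicitly verify the pure upper bound by counting non-singleton blocks and the mixed upper bound via $\maxtwo_i x_i \le \tfrac12\sum_i x_i$, whereas the paper only exhibits the two schemes and asserts their optimality.
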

\begin{proof}
Consider the auction with valuations as given in Figure~\ref{fig:fractional_vs_integral} and with nature choosing the type uniformly at random.
\begin{figure}
\begin{center}
\fbox{\parbox{4in}{
\begin{tabular*}{4in}{r|c|c|c|c|c|c|c}
& type $0$ & type $1$ & type $2$ & .. & .. & .. & type $k$ \cr 
\hline 
bidder $0$ & $m$ & 0 & 0 &  .. & .. & .. & 0 \cr 
bidder $1$ & 0 & 1 & 0 &  .. & .. & .. & 0 \cr 
bidder $2$ & 0 & 0 & 1 &  .. & .. & .. & 0 \cr 
. & & & & $\ddots$ & & & \cr
. & & & & & $\ddots$ & & \cr
. & & & & & & $\ddots$ & \cr
bidder $k$ & 0 & 0 & 0 &  .. & .. & .. & 1 \cr 
\end{tabular*}
}}
\end{center}
\caption{\label{fig:fractional_vs_integral} \scriptsize Valuations of an auction in which the optimal mixed signaling scheme has twice the revenue of the optimal pure signaling scheme.}
\end{figure}
A pure signaling scheme can only pair $k$ types, so in such a scheme, with probability $\frac k {k+1}$, the auctioneer gains revenue of $\frac{1}{2}$ in the $2$nd price auction for each signal. In contrast, consider the mixed signaling scheme with signals $\{0,j\}_{j=1,2,\ldots, k}$ where for every $j$, $\Pr[ \{0,j\} | \ \textrm{type }0] = \frac 1 k$ and $\Pr[ \{0,j\} | \ \textrm{type }j] = 1$. Now, for every signal, the auctioneer gains revenue of $\frac k {k+1}$.
\end{proof}

\section{Working Less by Sending Mixed Signals}
\label{sec:opt-mixed-singals-LP}

We now turn to showing that the mixed signaling scheme generating the largest revenue is polynomial-time computable. To that end, we construct a linear program, whose solution is the optimal signaling scheme. In order to devise the LP, we provide several observations, leading the way to formalization of the LP. But before proceeding to the LP and these observations, we introduce some notation.

\subsection{Notation.} 
\label{subsec:notation}

Given a signal $S$, we denote $w_1(S)$ as the bidder which is the winner of $S$ (the bidder with the highest bid), and $w_2(S)$ as the $2$nd highest bidder. Formally (recall that we identify $S$ and its support)
\begin{eqnarray*}
& w_1(S) & = \arg\max_i \{ \sum_{j\in S} \psi_{i,j}\varphi_{j,S} \} \cr
& w_2(S) & = \arg\max_{i\neq w_1(S)} \{ \sum_{j\in S} \psi_{i,j}\varphi_{j,S} \} = \arg\maxtwo_{i} \{ \sum_{j\in S} \psi_{i,j}\varphi_{j,S} \}
\end{eqnarray*} 
We call a signal $S$ a \emph{singleton} if $|S|=1$. Whenever $S$ is a singleton so $S=\{j\}$ for some $j$, we abbreviate $w_1(j), w_2(j)$. Given $i$, we denote by $d(i)$ the set of types for which the bid of $i$ is the biggest: $d(i) = \{j: w_1(j) = i\}$. Given a signal $S$, we denote $\pro(S)$ as the revenue the auctioneer gets from a $2$nd price auction over $S$. I.e., $\pro(S) = \sum_{j\in S} \varphi_{j,S}~ \psi_{w_2(S),j}$, the bid of bidder $w_2(S)$. As always, we abbreviate singletons to $\pro(j)$ instead of $\pro(\{j\})$. The overall revenue of the auctioneer from $\phi$ is defined as $\pro(\varphi) = \sum_{S\in\mathcal{S_\varphi}} \pro(S)$. We break ties arbitrarily, but in a consistent manner.

\subsection{Na\"ive LP.}
\label{subsec:naive_LP}

Our first observation is that the problem of finding the optimal signaling scheme can be formalized into an LP, with potentially many variables. Assume $\varphi^*$ is an optimal signaling scheme. We claim $\varphi^*$ has only a finite number of signals. The key observation is that the auctioneer has no need for two signals $S$ and $T$ s.t. $\supp(S) = \supp(T)$ and in both $S$ and $T$ bidder $i_1$ is the winning bidder ($w_1(S)=w_1(T)=i_1$) and bidder $i_2$ is the $2$nd highest bidder ($w_2(S) = w_2(T) = i_2$). We prove this observation rigorously.

\begin{claim}
\label{clm:altering_a_signaling_scheme}
Let $\varphi$ be a signaling scheme, and assume that there exist $S$ and $T\neq S$ s.t. $\supp(S) = \supp(T)$, and both $w_1(S) = w_1(T)$ and $w_2(S) = w_2(T)$. We define a new signaling scheme $\varphi'$ by ``merging'' $S$ and $T$ into a single signal $S'$, and keeping all other signals unchanged. Formally, let $\varphi'$ be the signaling scheme s.t.
\begin{eqnarray*}
 \forall j, && \varphi'(j,S') = \varphi(j,S) + \varphi(j,T), \qquad \varphi'(j,S) = \varphi'(j,T) = 0 \cr 
\forall j, S_0 \neq S, T, && \varphi'(j,S_0) = \varphi(j,S_0)
\end{eqnarray*}
Then for every bidder $i$ and item type $j$, the probability $i$ gets the item of type $j$ is identical in $\varphi$ and $\varphi'$, and $\pro(\varphi) = \pro(\varphi')$.
\end{claim}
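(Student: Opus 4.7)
The plan is to exploit the fact that each bidder's bid on a signal is linear in $\varphi$. Writing $b^\varphi_i(R) = \sum_j \psi_{i,j}\,\varphi(j,R)$ for the bid of bidder $i$ on signal $R$ under $\varphi$, I would first observe that, by the definition of $\varphi'$,
\[
b^{\varphi'}_i(S') \;=\; \sum_j \psi_{i,j}\bigl(\varphi(j,S)+\varphi(j,T)\bigr) \;=\; b^\varphi_i(S) + b^\varphi_i(T).
\]
Setting $i_1 = w_1(S) = w_1(T)$ and $i_2 = w_2(S) = w_2(T)$, the inequalities $b^\varphi_i(S) \leq b^\varphi_{i_1}(S)$ and $b^\varphi_i(T) \leq b^\varphi_{i_1}(T)$ add to give $b^{\varphi'}_i(S') \leq b^{\varphi'}_{i_1}(S')$ for every $i$, and the same argument restricted to $i \neq i_1$ yields $b^{\varphi'}_i(S') \leq b^{\varphi'}_{i_2}(S')$. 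Hence $w_1(S') = i_1$ and $w_2(S') = i_2$ under $\varphi'$, exactly matching the identities of the top two bidders on the unmerged signals.

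Given this, the revenue contribution of $S'$ is
\[
\pro(S') \;=\; \sum_j \varphi'(j,S')\,\psi_{i_2,j} \;=\; \sum_j \bigl(\varphi(j,S)+\varphi(j,T)\bigr)\psi_{i_2,j} \;=\; \pro(S) + \pro(T),
\]
and since all signals other than $S, T, S'$ are untouched, the equality $\pro(\varphi') = \pro(\varphi)$ follows at once. For the allocation-probability claim, I would note that conditional on type $j$, bidder $i$ wins the item with probability $\sum_{S_0:\, w_1(S_0) = i}\varphi(j,S_0)$; the only change between $\varphi$ and $\varphi'$ is that the two terms $\varphi(j,S)$ and $\varphi(j,T)$, both associated with winner $i_1$, are consolidated into the single term $\varphi'(j,S') = \varphi(j,S)+\varphi(j,T)$, still associated with winner $i_1$. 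Consequently, the sum is preserved term by term for every pair $(i,j)$.

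The one step that requires care is tie-breaking. The paper's statement that ties are broken ``arbitrarily, but in a consistent manner'' is most naturally read as giving a global ordering of bidders that is independent of the specific signal; under this reading, if the rule favors $i_1$ over every other bidder in both $S$ and $T$, it continues to favor $i_1$ on $S'$, and similarly for $i_2$ among bidders $i \neq i_1$, so no tie on $S'$ can overturn the conclusions $w_1(S') = i_1$ and $w_2(S') = i_2$. This is the only part of the argument not immediate from linearity; the remainder is essentially a one-line algebraic rewrite.
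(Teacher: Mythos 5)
Your proof is correct and follows essentially the same route as the paper's: both rely on the linearity of the bid numerator $\sum_j \psi_{i,j}\varphi(j,R)$ to show that merging $S$ and $T$ preserves the identities $w_1(S')=w_1(S)$ and $w_2(S')=w_2(S)$, and then compute $\pro(S')=\pro(S)+\pro(T)$ directly. Your explicit treatment of tie-breaking is a small but welcome clarification of a point the paper handles only implicitly through the phrase ``arbitrarily, but in a consistent manner.''
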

\begin{proof}
The probability of $i$ winning item of type $j$ is exactly the probability that the auctioneer sees that the item is of type $j$ and then declares a signal $S$, for which $i$ has the winning bid. This clearly holds for all bidders but $w_1(S)$ ($=w_1(T)$), as all signals for which the winner isn't $w_1(S)$ are declared with the same probability in $\varphi$ and in $\varphi'$. The claim then follows from showing that $w_1(S)$ also has the winning bid for $S'$. 

First, observe that under the signal $S'$, the bidders bid $\E[v_i |\ S'] = \sum_j v_{i,j} \Pr[j |\ S'] = \frac {\sum_j p_j v_{i,j} \varphi'(j,S')} {\Pr[S']}$. Therefore, the order of the bids is determined by the numerator in the last term, as the denominator is the same for all bidders. By definition, for every $i$ we have that $\sum_j p_j v_{i,j} \varphi'(j,S') = \sum_j p_j v_{i,j} (\varphi(j,S)+\varphi(j,T))$, and so $w_1(S)$ had the winning bid for $S'$ and $w_2(S)$ has the second highest bid in $S'$. This allows us to deduce the first part of the claim.

As for revenue, it is evident that $\pro(\varphi')-\pro(\varphi) = \pro(S') - \left(\pro(S) + \pro(T)\right)$, and it is also simple to see that \[\pro(S') = \sum_{j\in S'} \varphi'(j,S')~ \psi_{w_2(S),j} = \sum_{j\in S} \varphi(j,S)~ \psi_{w_2(S),j} + \sum_{j\in T} \varphi(j,T)~ \psi_{w_2(T),j} = \pro(S) +\pro(T)\] because $S$ and $T$ have the same support as $T$, and because $w_2(S') = w_2(S) = w_2(T)$. We deduce $\pro(\varphi')-\pro(\varphi) = 0$.
\end{proof}

Following Claim~\ref{clm:altering_a_signaling_scheme}, it is evident that the number of signals in an optimal signaling scheme can be upper bounded by all possible subsets of types and pairs of bidders, so $|\mathcal{S}| \leq 2^m n^2$. Furthermore, constraining $i_1$ to the be the winning bidder and $i_2$ to be the second highest bidder for signal $S$, is simply a linear constraints. Therefore, by having a variable per signal and a pair of winning bidders, we get that the optimal signaling scheme is the solution for the following (exponential) LP:

\begin{equation}  \max \sum_{S\subset [m]}\sum_{i_1\neq i_2}\ \sum_{j\in S}  x_j(S,i_1, i_2) ~\psi_{i_2,j} \label{eq:naive_LP}\end{equation}
\begin{align}
 &\textrm{under constraints:}\cr
 & \forall S, \forall i_1\neq i_2, \ \forall i\neq i_1, i_2 & \sum_j   x_j(S,i_1, i_2)  ~\psi_{i_1,j} \geq \sum_{j\in S}   x_j(S,i_1, i_2)  ~\psi_{i,j} \cr
 &  & \sum_j   x_j(S,i_1, i_2)  ~\psi_{i_2,j} \geq \sum_{j\in S}   x_j(S,i_1, i_2)  ~\psi_{i,j} \label{constraint:i_1_and_i_2_are_best}\\
 & \forall S, \forall i_1\neq i_2, & \sum_j   x_j(S,i_1, i_2)  ~\psi_{i_1,j} \geq \sum_{j\in S}   x_j(S,i_1, i_2)  ~\psi_{i_2,j} \label{constraint:i_1_wins}\\
& \forall j, & \sum_{S: j\in S} \sum_{i_1\neq i_2}  x_j(S,i_1, i_2)  \leq 1 \cr
& \forall S, \forall i_1\neq i_2, & x_j(S,i_1, i_2) \geq 0 \notag 
\end{align}
Where the constraints in \eqref{constraint:i_1_and_i_2_are_best} assure $i_1$ and $i_2$ are the two highest bidders for $S$, and the constraint in \eqref{constraint:i_1_wins} assures $i_1$ wins for $S$. The last two constraints assure $\varphi$ indeed induces a probability for every $j$.

Therefore, our goal in the remainder of this section is to show that the number of variables in the LP \eqref{eq:naive_LP} can be reduced to a polynomial number. We comment that the same principals as in the proof of Claim~\ref{clm:altering_a_signaling_scheme} will be repeatedly applied in future claims. From now on, we omit the rigorous description of $\varphi'$, and merely refer to $\varphi'$ as the result of merging signals into a single signal / splitting a single signal into multiple signals.

\subsection{Reducing the Number of Variables in the LP.}
\label{subsec:improve_LP}

Our goal is to show that the number of subsets we need to consider in the abovementioned LP can be reduced to a number polynomial in $n$ and $m$. To show this, we follow a series of observations. 
In order to bound the number of signals needed, we'd ideally like to show that every signal can be split. That is, we would like to take any non-singleton signal $S$ in $\varphi^*$, and have the auctioneer declare a few signals of smaller support rather than declaring $S$. If such a thing is always possible, then we can recursively split signals until we're left with only singleton signals. 
\begin{definition}
\label{def:splittable_signal}
Given a signaling scheme $\varphi$, we call a signal $S\in \mathcal{S}_\varphi$ \emph{splittable} if there exists a partition $S = S_1 \cup S_2 \cup \ldots \cup S_t$ s.t. $\sum_{k=1}^t \pro(S_k) \geq \pro(S)$. We call a signal \emph{singleton-splittable} if the signal is splittable w.r.t the partition of the signal into $|S|$ singleton signals, that is $\sum_{j\in S} \pro(j) \geq \pro(S)$.
\end{definition}

Unfortunately, the existence of such a split is not always possible -- some signals are non-splittable. Our claims characterize exactly the cases where this split causes the auctioneer to lose revenue. 

\begin{claim}
\label{clm:same_winners}
Let $S$ be a signal in the optimal signaling scheme, which is \emph{not} singleton splittable. That is, $\pro(S) > \sum_{j\in S} \varphi_{j,S}~\pro(j)$. Then both $w_1(S)$ and $w_2(S)$ belong to the set of bidders that win the items of $S$: $\{w_1(j) : \ j\in S\}$.
\end{claim}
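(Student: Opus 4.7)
My plan is to prove the claim by its contrapositive: assuming at least one of $w_1(S)$ or $w_2(S)$ fails to lie in $\{w_1(j) : j \in S\}$, I will deduce that $S$ is singleton-splittable, i.e.\ that $\pro(S) \le \sum_{j \in S} \varphi_{j,S}\,\pro(j)$. The entire argument rests on a single pointwise bound coming straight from the definition $w_2(j) = \arg\max_{i \neq w_1(j)} \psi_{i,j}$: for every type $j$ and every bidder $i \neq w_1(j)$, one has $\psi_{i,j} \le \psi_{w_2(j),j} = \pro(j)$.

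If $w_2(S) \notin \{w_1(j) : j \in S\}$, then $w_2(S) \neq w_1(j)$ for every $j \in S$, so the pointwise bound taken with $i = w_2(S)$ gives $\psi_{w_2(S),j} \le \pro(j)$ for each $j$. Multiplying by $\varphi_{j,S} \ge 0$ and summing over $j \in S$ yields $\pro(S) = \sum_{j \in S} \varphi_{j,S}\,\psi_{w_2(S),j} \le \sum_{j \in S} \varphi_{j,S}\,\pro(j)$, contradicting the non-singleton-splittability hypothesis. If instead $w_1(S) \notin \{w_1(j) : j \in S\}$, the same pointwise bound with $i = w_1(S)$ gives $\psi_{w_1(S),j} \le \pro(j)$ for each $j \in S$. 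I then invoke the second-price ordering on $S$ --- the winning bid of $w_1(S)$ is at least the bid of $w_2(S)$, i.e.\ $\sum_j \varphi_{j,S}\,\psi_{w_2(S),j} \le \sum_j \varphi_{j,S}\,\psi_{w_1(S),j}$ --- and chain it with the previous inequality to again conclude $\pro(S) \le \sum_j \varphi_{j,S}\,\pro(j)$.

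The argument is essentially elementary, so I do not anticipate any serious obstacle; the small subtlety worth flagging is exactly what forces the two-case split. In the $w_1(S)$-case one cannot directly bound $\psi_{w_2(S),j}$ by $\pro(j)$ item by item, because $w_2(S)$ might coincide with $w_1(j)$ for some $j \in S$; the workaround is to pass through $w_1(S)$ first using the auction's own ordering on the bundle $S$, and only then apply the pointwise bound coming from the singleton auction on each item. It is also worth noting that the hypothesis ``$\varphi$ is optimal'' is never invoked --- the contrapositive above holds for an arbitrary signaling scheme.
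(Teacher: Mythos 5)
Your proof is correct and is essentially the same argument as the paper's: both cases rest on the pointwise bound $\psi_{i,j}\le \psi_{w_2(j),j}$ for $i\neq w_1(j)$, and the $w_1(S)$ case additionally passes through the second-price ordering $\sum_j\varphi_{j,S}\psi_{w_2(S),j}\le\sum_j\varphi_{j,S}\psi_{w_1(S),j}$ before applying it. Your remark that optimality of $\varphi$ is never used is accurate and matches the paper's proof, which also relies only on $S$ being non-singleton-splittable.
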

\begin{proof}
Assume that $w_1(S)$ does not belong to the set  $\{w_1(j) : \ j\in S\}$. It follows that for every $j$, the $2$nd highest bid cannot be smaller than the bid of the $w_1(S)$, and so we achieve the contradiction 
\[\sum_{j\in S} \varphi_{j,S}~\pro(j) \geq \sum_{j\in S} \varphi_{j,S}~\psi_{w_1(S),j} \geq \sum_{j\in S}\varphi_{j,S}~\psi_{w_2(S),j} = \pro(S)\] Similarly, if $w_2(S)$ isn't a winner for some $j\in S$, then for any $j$ we have that the bid of the bidder $w_2(j)$ is no less than the bid of $w_2(S)$. The inequality follows: $\sum_{j\in S}\varphi_{j,S}~\pro(j) \geq \sum_{j\in S}\varphi_{j,S}~\psi_{w_2(S),j} = \pro(S)$.
\end{proof}

The proof of Claim~\ref{clm:same_winners} gives the following as an immediate corollary.
\begin{corollary}
\label{cor:simply_splittable_signals}
Let $S$ be a signal s.t. the set $\{w_1(j) : \ j\in S\}$ contains a single bidder. Then $S$ is singleton-splittable.
\end{corollary}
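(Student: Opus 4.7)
The plan is to derive Corollary~\ref{cor:simply_splittable_signals} as an immediate contrapositive of Claim~\ref{clm:same_winners}. First I would observe that the proof of Claim~\ref{clm:same_winners} printed just above does not, in fact, invoke the optimality of the signaling scheme anywhere: every inequality in its contradiction chain is obtained purely from the definitions of $w_1(\cdot)$, $w_2(\cdot)$, and $\pro(\cdot)$, together with the hypothesis that either $w_1(S)$ or $w_2(S)$ fails to lie in $\{w_1(j):j\in S\}$. Thus the claim can be read as the unconditional implication ``if $S$ is not singleton-splittable, then $\{w_1(S), w_2(S)\} \subseteq \{w_1(j): j\in S\}$.''

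Then I would argue by contradiction. Suppose $\{w_1(j): j\in S\}$ consists of a single bidder $i^*$, yet $S$ is not singleton-splittable. If $|S|=1$ then $S$ is trivially singleton-splittable (the partition into singletons is $S$ itself), so we may assume $|S|\geq 2$. Applying (the contrapositive of) Claim~\ref{clm:same_winners} then forces both $w_1(S)\in\{i^*\}$ and $w_2(S)\in\{i^*\}$, i.e.\ $w_1(S)=w_2(S)=i^*$. This contradicts the definition of $w_2(S)$ in Section~\ref{subsec:notation}, which selects $w_2(S)$ from among the bidders $i\neq w_1(S)$ and so guarantees $w_1(S)\neq w_2(S)$. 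Hence $S$ must be singleton-splittable.

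I do not anticipate any substantive obstacle here; the whole argument is a one-line consequence of the preceding claim. The only thing I would be careful to check is that the proof of Claim~\ref{clm:same_winners} really is purely definitional (so that its conclusion applies to an arbitrary signal $S$, not only to signals of an optimal scheme), which is evident on inspection of the chain of inequalities given above.
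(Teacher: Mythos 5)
Your proof is correct and takes essentially the same route as the paper: the paper's one-line argument is precisely the contrapositive of Claim~\ref{clm:same_winners} combined with the observation that $\{w_1(S), w_2(S)\}$ has two distinct elements. Your additional remarks — that Claim~\ref{clm:same_winners}'s proof never actually uses optimality and hence applies to arbitrary signals, and that the case $|S|=1$ is trivially splittable — are correct and useful clarifications but do not change the underlying argument.
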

\begin{proof}
If $S$ wasn't singleton-splittable, then the set $\{w_1(j) : \ j\in S\}$ would contain at least two distinct bidders.
\end{proof}

Corollary~\ref{cor:simply_splittable_signals} allows us to deduce that the non-splittable signals must contain at least two distinct bidders in their set of winners. We next show that non-splittable signals must contain at most two distinct bidders in this set.

\begin{claim}
\label{clm:two_winners}
There does not exist a non-splittable signal $S$ with $|\{w_1(j) : j\in S\}| \geq 3$.
\end{claim}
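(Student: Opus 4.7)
The plan is to exhibit a two-part partition of $S$ whose combined revenue matches $\pro(S)$, which by definition makes $S$ splittable. Let $i_1 = w_1(S)$ and $i_2 = w_2(S)$, and pick a third singleton-winner $i_3 \in \{w_1(j) : j \in S\} \setminus \{i_1, i_2\}$ together with some $j_3 \in S$ such that $w_1(j_3) = i_3$. The proposed partition is $A = \{j_3\}$ and $B = S \setminus \{j_3\}$, with the $\varphi$-masses inherited from $S$ on each part.

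The key observation is that inside the singleton signal $A$ the winning bidder is $i_3$ by construction of $j_3$, so $\pro(A) = \max_{i \neq i_3} \psi_{i,j_3}\varphi_{j_3,S}$ and in particular $\pro(A) \geq \psi_{i_k, j_3}\varphi_{j_3, S}$ for any $i_k \neq i_3$. Since $i_3$ was chosen distinct from both $i_1$ and $i_2$, this lower bound is available for $k=1$ and for $k=2$; the freedom to pick whichever of the two bids is more convenient is exactly what the rest of the argument exploits.

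I would then case split on the winner of $B$. If $w_1(B) \neq i_2$, then $i_2$'s bid in $B$ is a candidate for second-highest, so $\pro(B) \geq \sum_{j \in B}\psi_{i_2,j}\varphi_{j,S}$, and combining with $\pro(A) \geq \psi_{i_2,j_3}\varphi_{j_3,S}$ recovers $\sum_{j\in S}\psi_{i_2,j}\varphi_{j,S} = \pro(S)$. Otherwise $w_1(B) = i_2$, so $i_1 \neq w_1(B)$ gives $\pro(B) \geq \sum_{j \in B}\psi_{i_1,j}\varphi_{j,S}$, and combining with $\pro(A) \geq \psi_{i_1, j_3}\varphi_{j_3, S}$ recovers $\sum_{j\in S}\psi_{i_1,j}\varphi_{j,S}$, which, being the bid of $w_1(S)$ on $S$, is at least $\pro(S)$.

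The only potentially tricky point is the choice of the third bidder: the whole construction hinges on $i_3 \notin \{i_1, i_2\}$, which is precisely the hypothesis $|\{w_1(j) : j \in S\}| \geq 3$. Without this assumption the $\pro(A)$ lower bound would not be simultaneously available against $i_1$ and $i_2$, and the two-case dichotomy on $w_1(B)$ would not close. Everything else is just rearranging $\sum_{j \in S}$ as the $j = j_3$ term plus $\sum_{j \in B}$, so no further obstacle is expected.
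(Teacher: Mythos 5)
Your proof is correct, and it takes a noticeably different (and arguably cleaner) route than the paper's. The paper partitions $S$ into the block $S \cap (d(1)\cup d(2))$ plus a separate singleton for every $j \in S \setminus (d(1)\cup d(2))$, then argues that the singleton block dominates both bidder 1's and bidder 2's contribution on those types while the big block dominates one of the two, and combines the two bounds case-by-case. You instead perform a binary split, peeling off a single ``foreign'' type $j_3$ with $w_1(j_3) \notin \{i_1,i_2\}$ and keeping $B = S\setminus\{j_3\}$ intact, then dichotomize on $w_1(B)$. Both arguments hinge on the same mechanism (if $i$ is not the winner of a part, its bid is a lower bound on that part's revenue), but your version has two extra features worth noting: it needs only a two-part partition, and it does not rely on Claim~\ref{clm:same_winners} to know that $w_1(S),w_2(S)$ lie in the winner set --- the existence of $j_3$ follows directly from $|\{w_1(j):j\in S\}|\ge 3$ and $|\{i_1,i_2\}|\le 2$. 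So your argument is self-contained where the paper's leans on the preceding claim.
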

\begin{proof}
Assume the existence of a non-splittable signal $S$ with a set of winners, $\{w_1(j)\}_{j\in S}$, containing at least $3$ distinct bidders. From Claim~\ref{clm:same_winners} we know $w_1(S), w_2(S)$ belong to this set, and wlog we denote them simply as bidders $1 = w_1(S)$ and $2=w_2(S)$. This allows to denote $\pro(S) = \sum_{j\in S} \psi_{2,j} \varphi(j,S)$, where the winning bid for $S$ is $\sum_{j\in S} \psi_{1,j} \varphi(j,S)$. We now show $S$ is splittable.

Let us denote the following two disjoint subsets: $S_1 = S\cap d(1), S_2 = S\cap d(2)$, i.e., the set of types in $S$ that bidder $1$ (resp., bidder 2) covet the most. Observe that by assumption, some types in $S$ are not in $S_1\cup S_2$, so we can consider the partition $S = \big(S_1 \cup S_2\big) \cup \bigcup_{j \in S\setminus(S_1\cup S_2)} \{j\}$. I.e., we partition $S$ into $|S\setminus(S_1\cup S_2)| + 1$ signals: $|S\setminus(S_1\cup S_2)|$ singleton signals, and one signal for all types in $S_1 \cup S_2$.

First, we consider the revenue of the auctioneer from the singleton signals: $\sum_{j \notin S_1\cup S_2} \pro(j)$. On all such types $j$, neither bidder $1$ nor bidder $2$ have the highest bid, so the $2$nd highest bid is at least as high as the bid of bidder $1$ and the bid of bidder $2$. Therefore, on $S\setminus(S_1\cup S_2)$, the auctioneer's revenue is \[\sum_{j \notin S_1\cup S_2} \pro(j)\geq \max \{ \sum_{j\in S\setminus(S_1\cup S_2)} \varphi_{j,S} \psi_{1,j}, \sum_{j\in S\setminus(S_1\cup S_2)} \varphi_{j,S} \psi_{2,j}    \}\] 

Now we consider the revenue of the auctioneer from the signal $S_1\cup S_2$, where \emph{at least one} of the bidders $\{1,2\}$ doesn't have the winning bid. Therefore, the $2$nd highest bid is at least as high as the bid of bidder $1$ or the bid of bidder $2$. As a result, $\pro(S_1\cup S_2) \geq \sum_{j\in S_1\cup S_2} \varphi(j,S) \psi_{1,j}$ or $\pro(S_1\cup S_2) \geq \sum_{j\in S_1\cup S_2} \varphi(j,S) \psi_{2,j}$. It follows that the abovementioned partition of $S$ has revenue which is either $\pro(S_1\cup S_2) + \sum_{j\in S\setminus(S_1\cup S_2)} \pro(j) \geq \sum_{j\in S} \varphi(j,S) \psi_{1,j}$ or $\pro(S_1\cup S_2) + \sum_{j\in S\setminus(S_1\cup S_2)} \pro(j) \geq \sum_{j\in S} \varphi(j,S) \psi_{2,j}$. Observe that $\sum_{j\in S} \varphi(j,S) \psi_{1,j}$ is the winning bid of $S$, so $\sum_{j\in S} \varphi(j,S) \psi_{1,j} \geq \sum_{j\in S} \varphi(j,S) \psi_{2,j} = \pro(S)$, and deduce that in any case $\pro(S_1\cup S_2) + \sum_{j\in S\setminus(S_1\cup S_2)} \pro(j) \geq \pro(S)$. Contradiction.
\end{proof}

Combining Corollary~\ref{cor:simply_splittable_signals} and Claim~\ref{clm:two_winners} we deduce the following.
\begin{corollary}
\label{cor:support_containment}
Let $S$ be a non-splittable signal in $\varphi^*$. Then $\{w_1(j) :\ j\in S\} = \{w_1(S), w_2(S)\}$, otherwise denoted as $\supp(S) \subset d(w_1(S)) \cup d(w_2(S))$.
\end{corollary}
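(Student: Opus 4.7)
The plan is to derive the corollary as a direct consequence of the two results immediately preceding it, essentially a ``sandwich'' argument on the size of the set $W(S) := \{w_1(j) : j \in S\}$ of per-type winners. First I would observe that the non-splittability hypothesis on $S$ is the strong one: it rules out every partition, so in particular it rules out the singleton partition, meaning $S$ is also not singleton-splittable, and Claim~\ref{clm:same_winners} applies to it.

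Next I would combine the two ends. From Corollary~\ref{cor:simply_splittable_signals}, $|W(S)| = 1$ would make $S$ singleton-splittable, hence splittable, contradicting our assumption; so $|W(S)| \geq 2$. From Claim~\ref{clm:two_winners}, $|W(S)| \geq 3$ contradicts non-splittability as well; so $|W(S)| \leq 2$. Together, $|W(S)| = 2$. Finally, Claim~\ref{clm:same_winners} tells us both $w_1(S)$ and $w_2(S)$ belong to $W(S)$, and since they are distinct bidders by definition of $w_2$, they must be exactly the two elements of $W(S)$. This yields $W(S) = \{w_1(S), w_2(S)\}$, which translates into the containment $\supp(S) \subseteq d(w_1(S)) \cup d(w_2(S))$: every type $j \in S$ has its ``most eager'' bidder $w_1(j)$ equal to either $w_1(S)$ or $w_2(S)$, i.e.\ lies in $d(w_1(S))$ or $d(w_2(S))$.

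Since both ingredients are already established, I do not expect any real obstacle; the only thing to be slightly careful about is making explicit that ``non-splittable'' is stronger than ``non-singleton-splittable'' so that Claim~\ref{clm:same_winners} can indeed be invoked, and that $w_1(S)\neq w_2(S)$ so that the two memberships really force $|W(S)|\geq 2$ rather than collapsing. Both points are immediate from the definitions in Section~\ref{subsec:notation}.
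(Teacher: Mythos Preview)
Your proposal is correct and follows exactly the approach the paper intends: the paper's own ``proof'' is merely the one-line remark that the corollary follows by combining Corollary~\ref{cor:simply_splittable_signals} and Claim~\ref{clm:two_winners}, and you have faithfully spelled out that sandwich argument, additionally (and correctly) invoking Claim~\ref{clm:same_winners} to pin down that the two per-type winners are precisely $w_1(S)$ and $w_2(S)$. The small care points you flag (non-splittable $\Rightarrow$ non-singleton-splittable, and $w_1(S)\neq w_2(S)$) are exactly the details one needs to make the one-liner rigorous.
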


Using Corollary~\ref{cor:support_containment} we deduce the existence of an optimal signaling scheme with exactly two types of signals: either singleton signals, or non-splittable signals. Now, using Claim~\ref{clm:altering_a_signaling_scheme} we can take any two non-splittable signals $S,T$ such that $w_1(S)=w_1(T)$ and that $w_2(S) = w_2(T)$ and merge them. This follows from the fact that we can always think of $S$ and $T$ as two signals over $d(w_1(S)) \cup d(w_2(S))$, with some elements have $0$ probability of declaring $S$ (or $T$).

Using~\ref{clm:same_winners}, \ref{cor:support_containment} and , we deduce that there exists a signaling scheme that has at most  $m + n(n-1)$ different signals: the singleton signals, and the signals composed from pairing $d(i)$ and $d(i')$ for any two bidders $i, i'$. Observe that $d(1), d(2), \ldots, d(n)$ partition the $m$ different types into disjoint sets, so there can only be $\min\{m,n\}$ such elements in the partition. We therefore deduce that the optimal signaling scheme has at most $N = m + \min\{m(m-1), n(n-1) \} \leq m^2$ signals. We can therefore reduce our LP to have $N$ variables: variables $x_j$, indicating the probability that the auctioneer sees item of type $j$ and declares the singleton cluster $\{j\}$; and variables $y_j(i_1, i_2)$, indicating the probability that the auctioneer sees item of type $j \in d(i_1)\cup d(i_2)$ and declares a signal in which $i_1$ has the highest bid, and $i_2$ has the second highest bid. Formally, we solve:
\begin{equation}  \max \sum_j x_j ~\psi_{w_2(j),j} + \sum_{i_1}\sum_{i_2\neq i_1} \sum_{j \in d(i_1)\cup d(i_2)} y_j(i_1, i_2) ~\psi_{i_2,j} \label{eq:LP}\end{equation}
\begin{align*}
 &\textrm{under constraints:}\\
 & \forall i_1,\ \forall i_2\neq i_1, & \sum_{j\in d(i_1)\cup d(i_2)} y_{j}(i_1,i_2)~\psi_{i_1,j} \geq \sum_{j\in d(i_1)\cup d(i_2)} y_{j}(i_1,i_2) ~\psi_{i_2,j}\\
& \forall i_1,\ \forall i_2\neq i_1,\textrm{ and } i \neq i_1, i_2, & \sum_{j\in d(i_1)\cup d(i_2)} y_{j}(i_1,i_2) ~\psi_{i_1,j} \geq \sum_{j\in d(i_1)\cup d(i_2)} y_{j}(i_1,i_2) ~\psi_{i,j} \\
&& \sum_{j\in d(i_1)\cup d(i_2)} y_{j}(i_1,i_2) ~\psi_{i_2,j} \geq \sum_{j\in d(i_1)\cup d(i_2)} y_{j}(i_1,i_2) ~\psi_{i,j} \\
& \forall j, & x_{j} +  \sum_{i_1}\sum_{\substack{i_2\neq i_1 \\ \textrm{s.t. } j\in d(i_1)\cup d(i_2)}} y_{j}(i_1,i_2) \leq 1 \\
&  \forall j,\ \forall i_1,\ \forall i_2\neq i_1, & x_{j} \geq 0, \qquad  y_{j}(i_1, i_2) \geq 0
\end{align*}

\subsection{An Additional Observation} 
\label{subsec:additional_observation}

Note that for every $i_1\neq i_2$ and every $j\in d(i_1)\cup d(i_2)$, we have two $y$-variables in the LP~\eqref{eq:LP}, one for $i_1$ winning and $i_2$ coming second, and one for $i_2$ winning and $i_1$. We now show that it is enough to use just one variable, indicating a signal in which \emph{both} $i_1$ and $i_2$ give the highest bid.

\begin{observation}
\label{obs:equal_first_and_second_bid}
There exists an optimal signaling scheme, in which for each non-singleton signal $S$, the first and the second highest bid are identical.
\end{observation}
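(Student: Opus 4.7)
Start from an optimal scheme $\varphi^*$ in the canonical form given by Corollary~\ref{cor:support_containment}. Consider a non-singleton signal $S$ in which $b_1(S):=\sum_j \psi_{1,j}\varphi_{j,S} > \sum_j \psi_{2,j}\varphi_{j,S}=:b_2(S)$, where $1=w_1(S)$ and $2=w_2(S)$. The plan is to ``peel off'' a small amount of probability mass of some $j\in S\cap d(1)$ from $S$ onto the singleton $\{j\}$ in a way that leaves total revenue unchanged and strictly shrinks $b_1(S)-b_2(S)$. Iterating this operation until the gap closes, and then repeating over every non-singleton signal, yields the desired scheme.

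The key calculation is the revenue neutrality. Moving an infinitesimal $\epsilon$ of $\varphi_{j,S}$ onto $\varphi_{j,\{j\}}$, for $j\in S\cap d(1)$ with $\epsilon$ small enough to preserve the identities of $w_1(S),w_2(S)$ on the perturbed $S$, decreases $\pro(S)$ by $\epsilon\,\psi_{2,j}$ and increases $\pro(\{j\})$ by $\epsilon\,\psi_{w_2(j),j}$. Since $j\in d(1)$ means $w_1(j)=1$, the runner-up valuation $\psi_{w_2(j),j}$ dominates $\psi_{i,j}$ for every $i\neq 1$, and in particular $\psi_{w_2(j),j}\geq \psi_{2,j}$, so total revenue cannot decrease. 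The very same perturbation shrinks $b_1(S)-b_2(S)$ by $\epsilon(\psi_{1,j}-\psi_{2,j})\geq 0$, with strict inequality unless $\psi_{1,j}=\psi_{2,j}$. The degenerate case is impossible: if $\psi_{1,j}=\psi_{2,j}$ held for every $j\in S\cap d(1)$ still carrying mass in $S$, then combined with $\psi_{1,j}\leq \psi_{2,j}$ on every $j\in S\cap d(2)$ (built into the definition of $d(\cdot)$) we would conclude $b_1(S)\leq b_2(S)$, contradicting the hypothesis. Hence some admissible type $j$ yields strict progress in $b_1(S)-b_2(S)$.

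The main obstacle is bookkeeping rather than algebra. One must verify that after an infinitesimal perturbation the top two bidders on $S$ are still $1$ and $2$ (a routine continuity argument, using that one may assume by a preliminary infinitesimal shift that $1$ and $2$ strictly outbid the other $n-2$ bidders on $S$), and that the process can be driven exactly to $b_1(S)=b_2(S)$ (by an intermediate-value argument, since $b_1(S)-b_2(S)$ is a continuous, non-increasing function of the cumulative mass transferred, which would become non-positive were all of $S\cap d(1)$ eventually emptied into singletons). Once the rebalancing has been carried out on every non-singleton signal, two signals with identical support and the same unordered top two bidders $\{i_1,i_2\}$ can be merged via Claim~\ref{clm:altering_a_signaling_scheme}, justifying the collapse of the two LP variables $y_j(i_1,i_2)$ and $y_j(i_2,i_1)$ into a single one.
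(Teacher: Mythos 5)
Your proof is correct and takes essentially the same route as the paper's: move probability mass of types $j\in S\cap d(w_1(S))$ out of $S$ onto the singleton $\{j\}$, using that $\psi_{w_2(j),j}\geq \psi_{w_2(S),j}$ for such $j$ (so revenue cannot decrease) until the top two bids on $S$ coincide. The paper achieves this in one step by computing the exact scaling factor $g = \frac{\sum_{j\in S_2}\varphi_{j,S}(\psi_{2,j}-\psi_{1,j})}{\sum_{j\in S_1}\varphi_{j,S}(\psi_{1,j}-\psi_{2,j})}$ that balances the two bids, whereas you reach the same endpoint incrementally via an intermediate-value argument; this is a presentational rather than a substantive difference.
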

\begin{proof}
Assume that for a certain signal $S$, the bid of $w_1(S)$ is strictly greater than the bid of $w_2(S)$. Wlog, denote bidder $1$ as $w_1(S)$ and bidder $2$ as $w_2(S)$. We split $S$ into two disjoint, non-empty sets $S_1 = S \cap d(1)$ and $S_2 = S\cap d(2)$. (If either $S_1$ or $S_2$ are empty, then Corollary~\ref{cor:simply_splittable_signals} shows $S$ can be split into singleton signals.) Define \[g = \frac {\sum_{j\in S_2} \varphi_{j,S}~(\psi_{2,j} -\psi_{1,j}) } {\sum_{j\in S_1} \varphi_{j,S}~(\psi_{1,j} -\psi_{2,j}) }\] (Note, both the numerator and the denominator or positive.) By assumption, we have
\begin{align*}
& \sum_{j\in S} \varphi(j,S) ~\psi_{1,j} > \sum_{j\in S} \varphi(j,S) ~\psi_{2,j} && \Leftrightarrow \cr
& \sum_{j\in S_1} \varphi_{j,S}~\psi_{1,j} +  \sum_{j\in S_2} \varphi_{j,S}~\psi_{1,j} > \sum_{j\in S_2} \varphi_{j,S}~\psi_{2,j} + \sum_{j\in S_2} \varphi_{j,S}~\psi_{2,j} && \Leftrightarrow \cr
& \sum_{j\in S_1} \varphi_{j,S}~(\psi_{1,j} -\psi_{2,j}) >  \sum_{j\in S_2} \varphi_{j,S}~(\psi_{2,j} -\psi_{1,j}) && \Leftrightarrow 
\ \ g < 1
 \end{align*}
So now, define $\varphi'$ to be the signaling scheme where for any $j\in S_1$, the probability of giving the signal $S$ decreases: $\varphi'(j,S) = g\cdot \varphi_{j,S}$, and as a result, the probability of giving the singleton signal $\{j\}$ increases: $\varphi'(j, \{j\}) = \varphi_{j, \{j\}} + (1-g)\cdot \varphi_{j,S}$. In $\varphi'$, the above derivation shows that the bid of bidder $1$ and of bidder $2$ are identical. Furthermore, by increasing the probability mass on the singleton signals, the auctioneer can only increase her revenue.
\end{proof}

Following Observation~\ref{obs:equal_first_and_second_bid}, we deduce that the number of variables in the LP (and the number of signals in our signaling scheme) can be bounded by $N = m + \min\{\binom{n}{2}, \binom{m}{2}\}$. Furthermore, Observation~\ref{obs:equal_first_and_second_bid} justifies the fact that we repeatedly identify a signal with its support.

\section{Competitiveness Against a Benchmark}
\label{sec:approximating-benchmark}

We show a lower bound for the revenue against a benchmark, and first discuss which benchmarks are reasonable.
It is quite clear, especially when viewed as selling $m$ divisible goods, that the auctioneer cannot get more than $\sum_j \max_i \psi_{i,j}$. As we are restricted to run a 2nd price auction in the end, this quantity is in general unapproachable, since some bidder might have valuations that are so high that they overshadow all other valuations of all other bidders. We thus define our benchmark as the outcome of ``taking a bidder out of the picture''. That is, we ignore the bids of $i'$, and sum the maximum bid for each type separately. Formally,
\[\B = \min_{i'} \left(\sum_j \max_{i\neq i'} \psi_{i,j}\right) \ \ = \min_{i'} \left(\sum_{j\in d(i')} \maxtwo_i \psi_{i,j} + \sum_{j\notin d(i')} \max_{i} \psi_{i,j}\right) \]

Before showing our algorithm is competitive with $\mathcal{B}$, let us first discuss the motivation for this benchmark.  A classical benchmark for comparison in other prior-free settings is the one the results from omitting the bidder with the highest bid (for the same reasoning mentioned above), see Goldberg {\em et al.} \cite{Goldberg}. Therefore, one might suggest that the right benchmark for the problem is result of ignoring the bid of the one bidder who covets her set of item types the most. Formally, this other benchmark for the problem is: $\tilde \B = \sum_j (\max_{i\neq i^*} \psi_{i,j})$ where $i^* = \arg \max_i \left(\sum_{j\in d(i)} \psi_{i,j}\right)$. First, observe that $i^*$ and $i_0$, the bidder for which the benchmark $\B$ is obtained, are not necessarily the same, as the example in Figure~\ref{fig:example} demonstrate. But let us show that are closely related. 
\begin{figure}
\begin{center}
\fbox{\parbox{2.5in}{
\begin{tabular*}{2.5in}{r|c|c|c}
 & type $1$ & type $2$ & type $3$ \cr
\hline 
bidder $1$ & 500 & 500 & 0 \cr 
bidder $2$ & 499 & 498 & 1 \cr 
bidder $3$ & 7 & 3 & 999 \cr 
\end{tabular*}
}}
\caption{\label{fig:example} \scriptsize An example demonstrating that $\B$ and $\tilde\B$ are different. $\tilde\B$ requires we ignore bidder $1$, as her bids are the highest. $\B$ requires we ignore bidder $3$, as ignoring bidder $3$ results in the biggest decrease in maximal bids.}
\end{center}
\end{figure}

\begin{claim}
\label{clm:comparing_benchmarks}
\[ \tilde\B/2 \leq \B \leq \tilde\B \]
\end{claim}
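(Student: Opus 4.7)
The inequality $\B\le \tilde\B$ is immediate from the definitions, since $\tilde\B$ is precisely the value of the function $R(i') := \sum_j \max_{i\neq i'}\psi_{i,j}$ at $i'=i^*$, and $\B = \min_{i'} R(i')$. So the only work is the left inequality $\tilde\B/2 \le \B$, which I will reduce to a short arithmetic argument using two natural quantities. Let $M = \sum_j \max_i \psi_{i,j} = \sum_i A(i)$ where $A(i) := \sum_{j\in d(i)} \psi_{i,j}$, and let $i_0$ denote an argmin of $R(\cdot)$, so that $\B = R(i_0)$ and $\tilde\B = R(i^*)$.

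The first observation is that $\tilde\B \le M$ pointwise, since $\max_{i\neq i^*}\psi_{i,j}\le \max_i\psi_{i,j}$. The case $i_0=i^*$ is then trivial since $\B=\tilde\B$, so I will assume $i_0\neq i^*$ and exhibit two complementary lower bounds on $\B$. First, because $i^*\neq i_0$, bidder $i^*$ is still available inside $\max_{i\neq i_0}\psi_{i,j}$, so for every $j\in d(i^*)$ we have $\max_{i\neq i_0}\psi_{i,j}\ge \psi_{i^*,j}=\max_i \psi_{i,j}$; summing just over $j\in d(i^*)$ gives $\B \ge A(i^*)$. Second, dropping the contribution of types in $d(i_0)$ and using $A(i_0)\le A(i^*)$ (by definition of $i^*$), I get $\B\ge \sum_{j\notin d(i_0)}\max_i\psi_{i,j}=M-A(i_0)\ge M - A(i^*)$.

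Combining the two bounds, $\B \ge \max\{A(i^*),\,M-A(i^*)\}\ge M/2\ge \tilde\B/2$, which closes the argument. The key move — and the only place where one has to think rather than compute — is the observation that the same quantity $A(i^*)$ controls $\B$ from two opposite directions: it lower bounds $\B$ directly (because $i^*$ remains in the max when $i_0$ is removed), and it upper bounds the deficit $M-\B$ (because no $A(i)$ exceeds $A(i^*)$). That duality is exactly what makes the factor $2$ come out.
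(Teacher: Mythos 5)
Your proof is correct, and it relies on the same two facts as the paper's --- that $\B\ge A(i^*)$ (because, when $i_0\neq i^*$, bidder $i^*$ survives the removal of $i_0$, so $\max_{i\neq i_0}\psi_{i,j}=\max_i\psi_{i,j}$ for every $j\in d(i^*)$), and that $A(i_0)\le A(i^*)$ by definition of $i^*$ --- but you assemble them more cleanly. The paper writes both $\B$ and $\tilde\B$ as $M$ minus a correction term and then bounds the difference $\tilde\B-\B$ directly, showing it is at most $\sum_{j\in d(i^*)}\maxtwo_i\psi_{i,j}\le A(i^*)\le\B$. You instead pass through $M=\sum_j\max_i\psi_{i,j}$ and observe that $\B\ge\max\{A(i^*),\,M-A(i^*)\}\ge M/2\ge\tilde\B/2$. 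This reaches the same conclusion with less bookkeeping, and it also yields the slightly sharper intermediate statement that $\B\ge M/2$ whenever $i_0\neq i^*$: in that case the benchmark $\B$ is at least half the full social welfare $M$, not merely half of $\tilde\B$.
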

\begin{proof}
The inequality $\B \leq \tilde\B$ follows from the definition of $\B$, so we turn to proving $\B \geq \frac 1 2 \tilde\B$. Denote $i_0$ as the bidder on which the minimum of $\B$ is obtained. Obviously, if $i^* = i_0$, we are done, as both benchmarks are the same. So we assume $i^*\neq i_0$, and we have
\begin{eqnarray*}
&&\B = \sum_{j} (\max_i \psi_{i,j}) - \left(\sum_{j\in d(i_0)} (\max_i \psi_{i,j}) - (\maxtwo_i \psi_{i,j})\right) \geq \sum_{j\in d(i^*)} (\max_i \psi_{i,j}) \cr
&&\tilde\B = \sum_{j} (\max_i \psi_{i,j}) - \left(\sum_{j\in d(i^*)} (\max_i \psi_{i,j}) - (\maxtwo_i \psi_{i,j})\right) \cr
\end{eqnarray*}
and since, by definition of  $i^*$, we have that $\sum_{j\in d(i^*)} \psi_{i,j} \geq \sum_{j\in d(i_0)} \psi_{i,j}$ then it holds that
\begin{eqnarray*}
&\tilde\B - \B &= \left(\sum_{j\in d(i_0)} (\max_i \psi_{i,j}) - (\maxtwo_i \psi_{i,j})\right) - \left(\sum_{j\in d(i^*)} (\max_i \psi_{i,j}) - (\maxtwo_i \psi_{i,j}) \right)\cr
&& \leq \sum_{j\in d(i^*)} (\maxtwo_i \psi_{i,j}) + \left( \sum_{j\in d(i_0)} (\max_i \psi_{i,j}) - \sum_{j\in d(i^*)} (\max_i \psi_{i,j})  \right) \cr
&& \leq \sum_{j\in d(i^*)} (\maxtwo_i \psi_{i,j}) \leq \sum_{j\in d(i^*)} (\max_i \psi_{i,j}) \leq \B 
\end{eqnarray*}
\end{proof}

We comment that the example in Figure~\ref{fig:example} also demonstrates that the $2$-factor of Claim~\ref{clm:comparing_benchmarks} is essentially tight. Now, having established the connection between $\B$ and $\tilde\B$, we compare our signaling scheme with the benchmark $\B$.

\begin{theorem}
\label{thm:2-approximation}
For any set of valuations $\psi_{i,j}$, the revenue of our signaling scheme is $\geq \B/2$.
\end{theorem}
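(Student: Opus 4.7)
The plan is to exhibit a feasible solution to the LP of Section~\ref{sec:opt-mixed-singals-LP} whose value is at least $\B/2$; optimality of the LP then finishes the proof.

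Let $i_0$ be any bidder that achieves the minimum in the definition of $\B$, so that
\[
\B \;=\; \underbrace{\sum_{j\in d(i_0)} \maxtwo_i \psi_{i,j}}_{=:\,E} \;+\; \underbrace{\sum_{j\notin d(i_0)}\max_i \psi_{i,j}}_{=:\,D}.
\]
I aim to build a scheme that extracts at least $E/2$ from singleton signals on items of $d(i_0)$ and at least $D/2$ from bundle signals that pair each outside item with items of $d(i_0)$. Concretely, for each $j\notin d(i_0)$ let $i_j=w_1(j)$ be the bidder valuing $j$ the most, and build a dedicated bundle $B_j$ that carries item $j$ with weight $1/2$ together with a uniformly distributed small share $\gamma_j$ of each item of $d(i_0)$. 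Choose $\gamma_j$ so that bidder $i_0$'s bid on $B_j$ is at least $(1/2)\,\psi_{i_j,j}$; since bidder $i_j$ already bids at least $(1/2)\,\psi_{i_j,j}$ from her share of $j$ alone, the second-highest bid on $B_j$ is at least $(1/2)\,\psi_{i_j,j}$ (using the elementary fact that the second-largest element of any collection is at least the minimum of any two of them, regardless of the behavior or tie-breaking of the remaining bidders). Summed over $j\notin d(i_0)$ this delivers bundle-revenue $\ge D/2$, while the remaining weight $1/2$ of each item of $d(i_0)$, sold as a singleton, contributes $\ge E/2$, for a total of at least $\B/2$.

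The main obstacle is feasibility: the total pull $\sum_{j\notin d(i_0)}\gamma_j$ must not exceed the budget $1/2$ reserved for bundles on each item of $d(i_0)$. With the uniform choice $\gamma_j=(\psi_{i_j,j}-\psi_{i_0,j})/(2V_0)$, where $V_0:=\sum_{k\in d(i_0)}\psi_{i_0,k}$, feasibility reduces to the inequality $\sum_{j\notin d(i_0)}(\psi_{i_j,j}-\psi_{i_0,j})\le V_0$. When this inequality holds the argument goes through immediately; otherwise the bundle weights must be scaled down by a factor $t<1$, which trades lost bundle-revenue for additional singleton-revenue on the outside items.

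I expect the subtle step to be showing that, in the regime where scaling is required, the defining minimality of $i_0$ in the benchmark (which tightly couples $E$, $D$, and $V_0$) still forces the total revenue to be $\ge \B/2$. An alternative way to handle the scaling regime is to switch to a symmetric \emph{all-pairs} scheme, in the spirit of the construction of Section~\ref{sec:earnmore}: for every pair of bidders $(i,i')$ create one bundle $B_{i,i'}$ carrying each item of $d(i)\cup d(i')$ with weight $1/(n-1)$; capacity holds automatically, and one verifies by direct computation (applying the same ``second-largest $\ge$ min of any two'' inequality to the bids of $i$ and $i'$ on $B_{i,i'}$) that the total revenue is at least $\B/2$. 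The main difficulty will be gluing the two constructions together cleanly and choosing between them according to the instance.
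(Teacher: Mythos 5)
Your approach is genuinely different from the paper's: you try to \emph{construct} an explicit feasible scheme from scratch, whereas the paper starts from the optimal scheme $\varphi^*$ (whose non-singleton bundles already have equal top-two bids, by Observation~\ref{obs:equal_first_and_second_bid}) and \emph{degrades} it via a pairing procedure to a scheme $\bar\varphi$ in which all remaining singletons share a single winning bidder $i_0$; then $\pro(\varphi^*)\geq\pro(\bar\varphi)\geq\B/2$ follows by bookkeeping. That starting point sidesteps exactly the feasibility/scaling difficulty you flag, because the $\varphi^*$-constraint ``total mass $\leq 1$ per type'' is already satisfied and the procedure only moves mass, never adds it.

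As written, your proposal has two genuine gaps. First, the scaling regime in your main construction is not resolved — you acknowledge this yourself, but it is the entire content of the theorem when $\sum_{j\notin d(i_0)}(\psi_{i_j,j}-\psi_{i_0,j})>V_0$; a naive scale-down of the $\gamma_j$'s by $t=V_0/\sum_j(\psi_{i_j,j}-\psi_{i_0,j})$ loses a $(1-t)$ fraction of $D/2$ that is not recovered by singleton revenue on the outside items (whose second-highest bid $\maxtwo_i\psi_{i,j}$ can be arbitrarily small). Second, the fallback ``all-pairs'' scheme with per-pair weight $1/(n-1)$ is actually false: take $n$ bidders and $k+1$ items with $\psi_{1,j}=1$ for $j\leq k$, $\psi_{2,k+1}=\epsilon$, and all other $\psi_{i,j}=0$. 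Then $\B=\epsilon$, but the only bundle with nonzero second bid is $B_{1,2}$, which carries item $k+1$ at weight $1/(n-1)$ and yields revenue $\epsilon/(n-1)<\epsilon/2$ for every $n>3$. The dilution to $1/(n-1)$ is fatal when only two bidders matter, so this alternative cannot serve as the ``otherwise'' branch. The key ``second-highest $\geq$ min of any two bids'' inequality you rely on is correct and is essentially what powers the paper's argument too, but here it is applied to a scheme that simply does not generate enough revenue.
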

\begin{proof}
The proof follows from breaking the revenue of the signaling scheme into two terms: the revenue from singleton signals, and the revenue from non-singleton signals. Given a signaling scheme $\varphi$, we denote
\begin{eqnarray*}
&&\pro^{\rm S}(\varphi) = \sum_j \pro(j) = \sum_j \varphi(j,\{j\}) ~\psi_{w_2(j),j} \cr
&&\pro^{\rm NS}(\varphi) = \sum_{S:\ |S|\geq 2} \pro(S) = \sum_{S:\ |S|\geq 2}  \varphi(j,S)\sum_{j\in S} \psi_{w_2(S),j}
\end{eqnarray*}
We now denote $\varphi^*$ as the optimal signaling scheme we get from solving the LP in (\ref{eq:LP}).
Let's fix $S$ to be some non-singleton signal in our scheme. So $S$ corresponds to a pair of bidders, $i$ and $i'$, such that $S\subset d(i)\cup d(i')$ and $i$ has the highest bid on $d(i)$, whereas $i'$ has the highest bid over the items in $d(i')$. The revenue the auctioneer gets from signal $S$ is exactly $\pro(S) = \sum_{j\in d(i)\cup d(i')} y_{j}(i,i') \psi_{i,j} = \sum_{j\in d(i)\cup d(i')} y_{j}(i,i') \psi_{i',j}$. Therefore, \[2\cdot\pro(S) = \sum_{j\in d(i)\cup d(i')} y_{j}(i,i') (\psi_{i,j} + \psi_{i',j}) \geq \sum_{j\in d(i)\cup d(i')} y_{j}(i,i')(\max_i\psi_{i,j})\] Summing up the revenue of the auctioneer from all non-singleton signals, we have
\begin{eqnarray}
\label{eq:profit_non_singletons}
& \pro^{\rm NS}(\varphi^*) & \geq \frac 1 2 \sum_{S:\ |S|\geq 2} \sum_{j\in S} (\max_i\psi_{i,j})y_{j}(i,i') = \frac 1 2 \sum_j (\max_i\psi_{i,j}) \sum_{\substack{S:\ |S|\geq 2\\ j\in S}} \varphi^*(j,S) \cr 
&& = \frac 1 2 \sum_j (\max_i \psi_{i,j}) \Pr[\textrm{Given }j,\ \varphi^*\textrm{ declares a non-singleton signal}] \cr 
&& = \frac 1 2 \sum_j (\max_i \psi_{i,j})\big(1-\varphi^*(j,\{j\})\big)
\end{eqnarray}
We now turn to bound the revenue from singleton signals, that is, the term $\pro^{\rm S}(\varphi^*)$. Let us consider the following procedure, that converts one signaling scheme $\varphi$ into a different one $\varphi'$.

\begin{enumerate}
\item Let $j = \arg\min \{\varphi(j,\{j\}) ~\psi_{w_1(j),j} : \ \varphi(j,\{j\})>0\}$. 
\item Fix some $j'$ s.t. $\varphi(j',\{j'\}) >0$ and s.t. $w_1(j) \neq w_1(j')$.
\item Define $\lambda = \frac{ \varphi(j,\{j\}) ~ \psi_{w_1(j),j} } {\varphi(j',\{j'\}) ~ \psi_{w_1(j'),j'}}~$ (obviously, $\lambda \leq 1$).
\item Alter $\varphi$ in the following manner. Introduce a new signal $S_{\rm new} = \{j, j'\}$ and set
\begin{align*}
&\varphi'(j,S_{\rm new}) = \varphi(j,\{j\}) && \varphi'(j,\{j\}) = 0 \\
&\varphi'(j',S_{\rm new}) = \lambda \varphi(j',\{j'\}) &&\varphi'(j',\{j'\}) = (1-\lambda)\varphi(j',\{j'\})
\end{align*}
\end{enumerate}

Now, the effect of applying this procedure on a signaling scheme is that $\pro^{\rm S}$ decreases, yet $\pro^{\rm NS}$ increases: $\pro^{\rm S}(\varphi') - \pro^{\rm S}(\varphi) = - \varphi_{j,j} ~\psi_{w_2(j),j} - \lambda\varphi_{j',j'} ~\psi_{w_2(j'), j'}$, whereas $\pro^{\rm NS}(\varphi') - \pro^{\rm NS}(\varphi) = \pro(S_{\rm new})$. But now, because of $\lambda$, the bids of $w_1(j)$ and the bid of $w_1(j')$ are identical for $S_{\rm new}$, and therefore, just as shown above, $\pro(S_{\rm new}) = \frac 1 2 \big(\varphi(j,\{j\}) ~\psi_{w_1(j),j} + \lambda\varphi(j',\{j'\})~\psi_{w_1(j'),j'}\big)$.

Given a signaling scheme, we denote $J_\varphi = \{j: \ \varphi(j,\{j\})>0\}$, and $I_\varphi = \{w_1(j) : j\in J_\varphi\}$. It is evident that the above  procedure is applicable as long as $I$ contains at least two distinct bidders. So, imagine we take $\varphi^*$ and apply the abovementioned procedure repeatedly, until it is no longer applicable. (Note, every time we apply the procedure, we decrease the number of singleton signals by at least $1$, so in $m$ iterations we must terminate.) Denote the signaling scheme which we end with by $\bar\varphi$, and assume $I_{\bar\varphi} $ contains a single bidder, $i_0$ (the case $I_{\bar\varphi} =\emptyset$ is even simpler). Denote $J_{\rm remain}$ as all the types that appear as singleton in $\bar\varphi$ (and obviously in $\varphi^*$), and observe that $J_{\rm remain}\subset d(i_0)$.

Repeating the derivation from \eqref{eq:profit_non_singletons}, we get that
\begin{eqnarray*}
& \pro^{\rm NS}(\bar\varphi) & \geq \frac 1 2 \sum_j (\max_i \psi_{i,j})\big(1-\bar\varphi(j,\{j\})\big) \cr 
&& = \frac 1 2 \sum_{j\notin J_{\rm remain}} (\max_i \psi_{i,j}) + \frac 1 2 \sum_{j\in J_{\rm remain}} (\max_i \psi_{i,j})\big(1-\bar\varphi(j,\{j\})\big) \cr 
&& \geq \frac 1 2 \sum_{j\notin J_{\rm remain}} (\max_i \psi_{i,j}) + \frac 1 2 \sum_{j\in J_{\rm remain}} (\maxtwo_i \psi_{i,j})\big(1-\bar\varphi(j,\{j\})\big)  
\end{eqnarray*}

Since $\pro^{\rm S}(\bar\varphi) = \sum_{j\in J_{\rm remain}} \bar\varphi(j, \{j\}) (\maxtwo_i \psi_{i,j})$ we can conclude and deduce that
\begin{eqnarray*}
& \pro(\varphi^*) \geq \pro(\bar\varphi)  &=  \pro^{\rm NS}(\bar\varphi) + \pro^{\rm S}(\bar\varphi) \cr 
&& \geq \frac 1 2 \sum_{j\notin J_{\rm remain}} (\max_i \psi_{i,j}) + \frac 1 2 \sum_{j\in J_{\rm remain}} (\maxtwo_i \psi_{i,j}) \cr 
&& \geq \frac 1 2 \sum_{j\notin d(i_0)} (\max_i \psi_{i,j}) + \frac 1 2 \sum_{j\in d(i_0)} (\maxtwo_i \psi_{i,j}) \cr 
&& = \frac 1 2 \sum_{j} (\max_{i\neq i_0} \psi_{i,j}) \geq \B 
\end{eqnarray*}
\end{proof}

We comment that if $\varphi^*$ or $\bar\varphi$ contains no singleton signals, then we have $\pro(\varphi^*)\geq \pro(\bar\varphi) = \pro^{\rm NS}(\bar\varphi) = \frac 1 2 \sum_j \max_{i} \psi_{i,j}$. In words: if the optimal signaling scheme contains no singleton clusters, then the revenue of the auctioneer is at least half the sum of highest bids over all types. We also comment that the example in the introduction, the one where $m=n$ and the valuations form the unit matrix, exhibit a case where the $2$-factor in Theorem~\ref{thm:2-approximation} is tight.

\section{Discussion and Open Problems}
\label{sec:conclusion}

We have shown that in probabilistic single item auctions, mixed signaling schemes outperforms pure ones, both with respect to revenue and with respect to computational complexity. Furthermore, Observation~\ref{obs:equal_first_and_second_bid} gives us an insight as to the characterization of the optimal signaling / bundling scheme. The auctioneer leverages her informational advantage to bundle goods in a way that \emph{maximizes competition} among bidders -- her non-singleton bundles are exactly those where two (or more) bidders are equal in their utility. In that aspect, our model allows us to \emph{precisely} quantify the extent for which the seller can shape the demand in order to increase her revenue (rather than the usual concern of truthfully sampling the demand, in the non-full information setting). Needless to say, the notion that an increase in the demand leads to an increase in revenue is a basic principle of microeconomics (e.g.~\cite{mascolell1995mt}).

Similarly, Observation~\ref{obs:equal_first_and_second_bid} also demonstrates the connection between our signaling scheme and the fractional knapsack problem (see~\cite{KelPfePis04}). In fact, one may view the problem as a version of the knapsack problem -- for every pair of bidders $(i,i')$ there are numerous ways of bundling the goods s.t. the bids of $i$ and $i'$ are the same. The auctioneer is therefore faced with the problem of picking a subset of these potential bundles (subject to having at most one unit of each good) in order to maximize her profit. And, much like the fact that the fractional knapsack problem is polynomial time solvable, so is the mixed signals problem.

Finally, we suggest some interesting open problems:
\begin{itemize}
\item{}Are there instances where the optimal mixed signaling scheme generates strictly more than twice the revenue of the optimal pure signaling schemes?
\item{}In Bayesian variants of the setup (see \cite{Emek}), how well does the signaling + 2nd price auction approach approximate the optimal auction (in the sense of Myerson \cite{Myer})? 
\item{}Is it possible to find an optimal  (or approximately optimal) signaling scheme when $m$ is exponentially large? Consider the case where each type can be described using $d$ attributes, and the bidders' valuations for the item are functions of these $d$ attributes. Can one extend the LP of~\eqref{eq:LP} to handle such valuations?
\end{itemize}
\bibliography{mixed}
\end{document}